\documentclass[journal,onecolumn,12pt,twoside]{IEEEtran}
\usepackage{amsfonts,color,pslatex}
\usepackage{amssymb,amsmath,latexsym}

\usepackage[para]{threeparttable}

\newcommand{\tr}{{\rm Tr}}
\newcommand{\gf}{{\rm GF}}

\newcommand{\C}{{\mathcal C}}

\newcommand{\wt}{{\mathrm{wt}}}

\newcommand{\bc}{{\mathbf{c}}}
\newcommand{\ttC}{{\mathtt C}}

\newtheorem{theorem}{Theorem}[section]
\newtheorem{lemma}[theorem]{Lemma}

\newtheorem{example}[theorem]{Example}

\begin{document}

\title{Five Families of Three-Weight Ternary Cyclic Codes and Their Duals\thanks{C. Ding's research was supported by
The Hong Kong Research Grants Council, Proj. No. 600812.
Y. Gao's research was partially supported by the National Natural
Science Foundation of China under Grant No. 11101019.
Z. Zhou's research was supported by
the Natural Science Foundation of China, Proj. No. 61201243, and also The Hong Kong Research
Grants Council, Proj. No. 600812. }
}

\author{Cunsheng~Ding, Ying~Gao,
and~Zhengchun Zhou   
\thanks{C. Ding is with the Department of Computer Science
                                                  and Engineering, The Hong Kong University of Science and Technology,
                                                  Clear Water Bay, Kowloon, Hong Kong, China (email: cding@ust.hk).}
\thanks{Y. Gao is with the School of Mathematics and Systems Science, Beijing University of Aeronautics and Astronautics, Beijing, China
                                        (email: gaoyingbuaa@gmail.com).}
\thanks{Z. Zhou is with the School of Mathematics, Southwest Jiaotong University,
Chengdu, 610031, China (email: zzc@home.swjtu.edu.cn).}
}

\maketitle

\begin{abstract}
As a subclass of linear codes, cyclic codes have applications in consumer electronics,
data storage systems, and communication systems as they have efficient encoding and
decoding algorithms. In this paper, five families of three-weight ternary cyclic codes whose
duals have two zeros are presented. The weight distributions of the five families of cyclic
codes are settled. The duals of two families of the cyclic codes are optimal.
\end{abstract}

\begin{keywords}
Cyclic codes, weight distribution, weight enumerator, frequency hopping sequences, secret sharing.
\end{keywords}

\section{Introduction}

Let $p$ be a prime.
An $[n,\kappa, d]$ linear code over $\gf(p)$ is a $\kappa$-dimensional subspace of $\gf(p)^n$
with minimum nonzero (Hamming) weight $d$.
A linear $[n,\kappa]$ code $\C$ over $\gf(p)$ is called {\em cyclic} if
$(c_0,c_1, \cdots, c_{n-1}) \in \C$ implies $(c_{n-1}, c_0, c_1, \cdots, c_{n-2})
\in \C$.
By identifying any vector $(c_0,c_1, \cdots, c_{n-1}) \in \gf(p)^n$
with
$$
c_0+c_1x+c_2x^2+ \cdots + c_{n-1}x^{n-1} \in \gf(p)[x]/(x^n-1),
$$
any linear code $\C$ of length $n$ over $\gf(p)$ corresponds to a subset of the quotient ring
$\gf(p)[x]/(x^n-1)$.
A linear code $\C$ is cyclic if and only if the corresponding subset in $\gf(p)[x]/(x^n-1)$
is an ideal of the ring $\gf(p)[x]/(x^n-1)$.

It is well known that every ideal of $\gf(p)[x]/(x^n-1)$ is principal. Let $\C=\langle g(x) \rangle$
be a cyclic code, where $g(x)$ is monic and has the smallest degree among all the
generators of $\C$. Then $g(x)$ is unique and called the {\em generator polynomial,}
and $h(x)=(x^n-1)/g(x)$ is referred to as the {\em parity-check} polynomial of $\C$.
If the parity check polynomial $h(x)$ of a code $\C$ of length $n$ over $\gf(p)$ is the
product of $t$ distinct irreducible polynomials over $\gf(p)$, we say that the dual code
$\C^\perp$ has $t$ zeros.

Let $A_i$ denote the number of codewords with Hamming weight $i$ in a code
$\C$ of length $n$. The {\em weight enumerator} of $\C$ is defined by
$
1+A_1y+A_2y^2+ \cdots + A_ny^n.
$
The {\em weight distribution} $\{A_0,A_1,\ldots,A_n\}$ is an important research topic in coding theory.
First, it contains crucial information as to estimate the error correcting capability and the probability of
error detection and correction with respect to some algorithms \cite{Klov}. Second, due to rich algebraic
structures of cyclic codes, the weight distribution is often related to interesting and challenging problems
in number theory.

As a subclass of linear codes, cyclic codes have been widely used in consumer electronics, data
transmission technologies,
broadcast systems, and computer applications as they have efficient encoding and decoding
algorithms. Cyclic codes with a few weights are of special interest in authentication codes as
certain parameters of the authentication codes constructed from these cyclic codes are easy
to compute  \cite{DW05}, and in secret sharing schemes as the access structures of such
secret sharing schemes derived from such cyclic code are easy to determine \cite{CDY05,DS06,YD06}.
Cyclic codes with a few weights are also of special interest in designing frequency hopping sequences
\cite{DFFJ,DYT}. Three-weight cyclic codes have also applications in association schemes \cite{CG84}.
These are some of the motivations of studying cyclic codes with a few weights.

In this paper, five families of three-weight ternary cyclic codes whose duals have two zeros are
presented. The weight distributions of the five families of cyclic codes are settled. The duals of
two families of the cyclic codes proposed in this paper are optimal ternary codes. As a byproduct,
three new decimation values of maximum-length sequences giving only three correlation values
are also obtained in this paper.

This paper is organized as follows. Section \ref{sec-notations} fixes some notations for this paper.
Section \ref{sec-codes2zeros} defines cyclic codes over $\gf(p)$ whose duals have two zeros.
Section \ref{sec-prelim} presents two lemmas that will be needed in the sequel.
Section \ref{sec-7const} defines two families of cyclic codes and determines their weight distributions.
Section \ref{sec-2nd3wtc} describes three families cyclic codes and determines their weight distributions.
Section \ref{sec-fin} makes  concluding remarks.

\section{Some notations fixed throughout this paper}\label{sec-notations}

Throughout this paper, we adopt the following notations unless otherwise stated:
\begin{itemize}
\item $p$ is a prime and
          $q=p^m$, where $m$ is a positive integer.
\item $n=q-1$, which is  the length of a cyclic code over $\gf(p)$.
\item $\tr_{1}^{j}(x)$ is the trace function from $\gf(p^j)$ to $\gf(p)$ for any positive integer $j$.
\item $\chi$ is the canonical additive character on $\gf(q)$,  i.e., $\chi(x)=e^{2\pi \sqrt{-1}  \tr_1^{m}(x)/ p}$
         for any $x \in \gf(q)$.

\item $\ttC_a$ denotes the $p$-cyclotomic coset modulo $n$ containing $a$, where $a$ is any integer with
         $0 \le a \le r-2$, and $\ell_a:=|\ttC_a|$ denote the size of the cyclotomic coset $\ttC_a$.
\item By the Database we mean the collection of the tables of best linear codes known maintained by
         Markus Grassl at http://www.codetables.de/.
\end{itemize}

\section{Cyclic codes whose duals have two zeros}\label{sec-codes2zeros}

Given a positive integer $m$, recall that $q=p^m$ and $n=q-1$ throughout this paper.
Let $\alpha$ be a generator of the multiplicative group $\gf(q)^*$. For any $0 \le a \le q-2$,
denote by $m_a(x)$ the minimal polynomial of $\alpha^{-a}$ over $\gf(p)$.

Let $0 \le u \le q-2$ and $0 \le v \le q-2$ be any two integers such that $\ttC_u \cap \ttC_v = \emptyset$.
Let $\C_{(u,v,p,m)}$ be the cyclic code over $\gf(p)$ with length $n$ whose codewords are given by
\begin{eqnarray} \label{eqn-def-cab}
\bc(a,b)=(c_0,c_1,\ldots,c_{n-1}), \quad  \forall \, (a,b) \in \gf(p^{\ell_u})\times \gf(p^{\ell_v}),
\end{eqnarray}
where
\[c_i=\tr^{\ell_u}_{1}\left(a \alpha^{iu}\right)+\tr^{\ell_v}_{1}\left(b \alpha^{iv}\right), \quad 0 \le i \le n-1.\]
By Delsarte's Theorem, the code
$\C_{(u,v,p,m)}$ has parity-check polynomial $m_u(x)m_v(x)$ and dimension $\ell_u+\ell_v$.

In terms of exponential sums, the
Hamming weight $\wt(\bc(a,b))$ of the codeword $\bc(a,b)$ of (\ref{eqn-def-cab}) in $\C_{(u,v,p,m)}$ is given by
\begin{equation}\label{eqn-wtformula}
\wt(\bc(a,b))=(p-1)p^{m-1} -\frac{1}{p} \sum_{y \in \gf(p)^*} T_{v}(ya,yb)
\end{equation}
where
\begin{eqnarray}\label{eqn-T-sum}
T_{(u,v)}(a,b)=\sum_{x \in \gf(q)} \chi(ax^u+bx^v)
\end{eqnarray}
for each $(a, b) \in \gf(q)^2$. Throughout this section,  the function $T_{(u,v)}(a,b)$ is always defined as in (\ref{eqn-T-sum})
for any given $u$ and $v$.

The following lemma is an extension of Lemma 6.1 in \cite{ZD2}, and will be frequently used in the sequel when we determine the weight distributions of the five families of cyclic codes.

\begin{lemma}\label{Lemma-key}
Let $s$ be any integer with $\gcd(s,q-1)=2$. Then
\begin{eqnarray*}
T_{(u,v)}(a,b)=\frac{1}{2}\left(\sum_{x\in \gf(q)}\chi(ax^{su}+bx^{sv})+\sum_{x\in \gf(q)}\chi(a\lambda^u x^{su}+b\lambda^{v} x^{sv})           \right)
\end{eqnarray*}
where $\lambda$ is any fixed nonsquare in $\gf(q)^*$.
\end{lemma}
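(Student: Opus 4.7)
The plan is to exploit the fact that $\gcd(s,q-1)=2$ forces $q$ to be odd (since $q-1$ must be even) and makes the map $\phi\colon \gf(q)^*\to \gf(q)^*$, $x\mapsto x^s$, a two-to-one homomorphism whose image is exactly the group of nonzero squares $\SQ$ of $\gf(q)$. I would then rewrite each of the two inner sums on the right-hand side as a sum over $\SQ$ (respectively over the nonsquares), with a factor of $2$ coming from the two-to-one behavior, and recombine.

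More concretely, first I would split off the $x=0$ contribution in each of the three sums, so that everything else runs over $\gf(q)^*$. For the first sum, the substitution $y=x^s$ (ranging twice over $\SQ$) gives
\begin{equation*}
\sum_{x\in\gf(q)^*}\chi\bigl(ax^{su}+bx^{sv}\bigr)=2\sum_{y\in\SQ}\chi(ay^u+by^v).
\end{equation*}
For the second sum, the key observation is that $a\lambda^u x^{su}+b\lambda^v x^{sv}=a(\lambda x^s)^u+b(\lambda x^s)^v$; setting $y=\lambda x^s$ and using that $\lambda\cdot\SQ$ is precisely the set of nonsquares $\mathrm{NSQ}$, the same two-to-one counting yields
\begin{equation*}
\sum_{x\in\gf(q)^*}\chi\bigl(a\lambda^u x^{su}+b\lambda^v x^{sv}\bigr)=2\sum_{y\in\mathrm{NSQ}}\chi(ay^u+by^v).
\end{equation*}

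Adding these two identities makes $\SQ$ and $\mathrm{NSQ}$ combine into all of $\gf(q)^*$, giving $2\sum_{y\in\gf(q)^*}\chi(ay^u+by^v)$. Putting back the $x=0$ terms (each contributes $\chi(0)=1$ on both sides, so they match), one obtains
\begin{equation*}
\sum_{x\in\gf(q)}\chi(ax^{su}+bx^{sv})+\sum_{x\in\gf(q)}\chi(a\lambda^u x^{su}+b\lambda^v x^{sv})=2T_{(u,v)}(a,b),
\end{equation*}
which is the claim after dividing by $2$.

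I do not expect a serious obstacle; this is essentially bookkeeping about the cosets of the squares. The only step that needs a little care is checking that the substitution $y=\lambda x^s$ really does hit every nonsquare exactly twice (which follows because $x\mapsto x^s$ is two-to-one onto $\SQ$ and multiplication by the nonsquare $\lambda$ is a bijection $\SQ\to\mathrm{NSQ}$), and verifying that the $x=0$ boundary contributions in $T_{(u,v)}$ and in the two right-hand side sums line up correctly.
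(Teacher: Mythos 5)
Your proof is correct and is essentially the paper's own argument run in the opposite direction: both rest on the facts that $x\mapsto x^s$ is two-to-one from $\gf(q)^*$ onto the nonzero squares, that multiplication by $\lambda$ carries the squares onto the nonsquares, and the identity $a\lambda^u x^{su}+b\lambda^v x^{sv}=a(\lambda x^s)^u+b(\lambda x^s)^v$. The paper simply starts from $T_{(u,v)}(a,b)$ and splits $\gf(q)^*$ into squares and nonsquares before re-parametrizing by $x^s$, whereas you start from the right-hand side and recombine; there is no gap.
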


\begin{proof}
Let $C_0^{(2,q)}$
denote the set of all
nonzero squares in $\gf(q)$.
Then
\begin{eqnarray}\label{eqn-T-ab-2}
T_{(u,v)}(a,b)=1+\sum_{x\in C_0^{(2,q)}}\chi(ax^u+bx^{v})+\sum_{x\in C_0^{(2,q)}}\chi(a\lambda^u x^u+b\lambda^vx^v).
\end{eqnarray}
Note that $\gcd(q-1,s)=2$. When $x$ runs through $\gf(q)$,  $x^s$ runs twice through the
nonzero squares in $\gf(q)$ and takes on the value $0$ once. Similarly,
$\lambda x^s$ runs twice through all the
nonsquares in $\gf(q)$ and takes on the value $0$ once. The conclusion
then follows directly from (\ref{eqn-T-ab-2}) and the discussions above.
\end{proof}

There are a lot of references on the codes $\C_{(u,v,p,m)}$ (see for example
\cite{DL2,FL07,DL1,TFeng,Mc04,veg,veg2,WT,X,X2}).
This family of cyclic codes $\C_{(u,v,p,m)}$ may have many nonzero weights. It is obvious that $\C_{(u,v,p,m)}$
cannot be a constant-weight code as its parity-check polynomial has two zeros. When $q=2$, the codes
$\C_{(u,v,p,m)}$ cannot be a two-weight code if $\ttC_u \cap \ttC_v = \emptyset$ and $\ell_u>1$
and $\ell_v >1$. When $q$ is odd, $\C_{(u,v,p,m)}$ could be a two-weight code.

A number of three-weight nonbinary cyclic codes $\C_{(u,v,p,m)}$ have been constructed (see for example
\cite{CDY05,Choi,FL07,LuoFeng2,Xia,YCD}). In this paper, we present five families of three-weight ternary
cyclic codes $\C_{(u,v,p,m)}$, determine their weight distributions and study their duals.

\section{Two families of three-weight ternary cyclic codes and their weight enumerators}\label{sec-7const}

In this section, we propose two families of three-weight cyclic codes $\C_{(u, v, 3, m)}$ over $\gf(3)$ where $u$ and $v$
are some integers with $\ttC_u \cap \ttC_v = \emptyset$ and $(\ell_u, \ell_v)=(m,m)$. It is obvious that the code  $\C_{(u, v, 3, m)}$ has length $3^m-1$ and dimension $2m$ under these assumptions.

\subsection{Some auxuliary results}\label{sec-prelim}

In this subsection, we introduce a lemma on exponential sums over finite fields and a lemma
regarding the dual code $\C_{(1,v,3,m)}^\perp$ of the code $\C_{(1,v,3,m)}$. Recall that
$\chi$ is the canonical additive character of $\gf(q)$ defined in Section \ref{sec-notations}.

The following lemma will be employed in the sequel, and was proved in \cite{ZD2} with the help
of some results from \cite{YCD,LuoFeng2}.

\begin{lemma}\label{Lemma-Quadratic-expo-1}
Let $m$ be odd and $h$ be an integer  with $\gcd(m,h)=1$. Define
\begin{eqnarray*}
R(a,b)=\sum_{x\in \gf(q)}\chi(ax^{p^h+1}+bx^2).
\end{eqnarray*}
Then, as $(a,b)$ runs through $\gf(q)^2$,  the values of the sum
$$\sum_{y\in \gf(p)^*}(R(ya,yb)+R(-ya,yb))$$
have the following distribution
\begin{eqnarray*}
\begin{array}{cc}
\textrm{Value}&  ~~~     \textrm{Frequency}\\
2(p-1)p^{m}  &     ~~~    1\\
(p-1)p^{(m+1)/ 2}&  ~~~   (p^{m-1}+p^{(m-1)/ 2})(p^m-1)\\
0&                    (p^m-2p^{m-1}+1)(p^m-1)\\
-(p-1)p^{(m+1)/ 2}&  ~~  ~(p^{m-1}-p^{(m-1)/ 2})(p^m-1).
\end{array}
\end{eqnarray*}
\end{lemma}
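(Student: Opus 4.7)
The approach is to recognize $R(a,b)$ as a quadratic Gauss sum on $\gf(q)\cong\gf(p)^m$ and then track how the distribution of the rank and type of the underlying quadratic form governs the distribution of $S(a,b):=\sum_{y\in\gf(p)^*}(R(ya,yb)+R(-ya,yb))$. Both exponents $p^h+1$ and $2$ have $p$-weight $2$ for odd $p$, so after composition with $\tr_1^m$ the map $Q_{a,b}(x)=ax^{p^h+1}+bx^2$ is a $\gf(p)$-valued quadratic form on $\gf(q)$, and $R(a,b)$ has the standard closed form $\varepsilon_{a,b}\,G^{r_{a,b}}\,p^{\,m-r_{a,b}/2}$, where $r_{a,b}$ is the rank, $\varepsilon_{a,b}\in\{\pm 1\}$ the type, and $G$ a quadratic Gauss sum on $\gf(p)$ with $G^2=(-1)^{(p-1)/2}p$. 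The essential point is that $G^{r}$ is purely imaginary for odd $r$ and real for even $r$.

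First I would reduce $S(a,b)$ to a manifestly real quantity. The change of summation variable $y\to-y$ together with the evident identity $R(-a,b)=\sum_{x}\chi(Q_{-a,b}(x))$ shows that $S(a,b)$ is a sum of Gauss sums of the two forms $Q_{\pm a,b}$ over the scalings $yQ_{\pm a,b}$, and the parity remark forces contributions only from $(a,b)$ whose associated rank is even. Since $m$ is odd, the rank is at most $m$, so the full-rank case (rank $m$, odd) contributes $0$ and the real contributions must come from pairs with rank $m-1$, producing exactly the magnitudes $(p-1)p^{(m+1)/2}$ visible in the table. The trivial pair $(a,b)=(0,0)$ gives $R=p^m$ and accounts for the top entry $2(p-1)p^m$.

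Next I would classify pairs $(a,b)$ by the rank of $Q_{a,b}$ via the radical, which is the $\gf(p)$-kernel of the linearized polynomial $L_{a,b}(x)=a^{p^{m-h}}x^{p^{m-h}}+ax^{p^h}+2bx$. Because $\gcd(m,h)=1$ and $m$ is odd, a standard argument shows that this kernel has $\gf(p)$-dimension $0$, $1$ or $2$, so $r_{a,b}\in\{m,m-1,m-2\}$. Combining rank $m-2$ and rank $m$ into the zero-contribution tally gives the middle frequency $p^m-2p^{m-1}+1$ times $p^m-1$ after accounting for the scalar family $\{(ya,yb):y\in\gf(p)^*\}$, and the two $\pm(p-1)p^{(m+1)/2}$ contributions come from rank $m-1$ with the two possible values of $\varepsilon_{a,b}$. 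To obtain the refined counts $p^{m-1}\pm p^{(m-1)/2}$ I would invoke the cross-correlation enumerations of \cite{LuoFeng2} and the weight-distribution computation of \cite{YCD}, which give exactly the sizes of the loci $\{(a,b):\dim_{\gf(p)}\ker L_{a,b}=1,\ \varepsilon_{a,b}=\pm 1\}$.

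The main obstacle is the final step, namely distinguishing the two type classes $\varepsilon_{a,b}=\pm 1$ along the rank-$(m-1)$ locus. The raw rank count is classical from the theory of linearized polynomials, but splitting it into the two near-equal halves $p^{m-1}\pm p^{(m-1)/2}$ requires evaluating a quadratic-character sum over that locus, which is the technical core borrowed from \cite{LuoFeng2,YCD}. Once that count is in hand, the tabulation follows by assembling the contributions case by case and multiplying by the appropriate factors of $p-1$ coming from the scalar sum over $y$.
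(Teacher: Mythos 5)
You should first note that the paper does not prove this lemma at all: it is imported verbatim from \cite{ZD2}, where it is established with value--distribution results from \cite{YCD} and \cite{LuoFeng2}. Your sketch follows the natural route underlying those references (view $\tr_1^m(ax^{p^h+1}+bx^2)$ as a $\gf(p)$-valued quadratic form $Q_{a,b}$, evaluate $R$ by rank and type, count by the radical of the linearized polynomial). Two small slips: the closed form should be $\varepsilon_{a,b}G^{r_{a,b}}p^{m-r_{a,b}}$, not $G^{r_{a,b}}p^{m-r_{a,b}/2}$ (the magnitude is $p^{m-r_{a,b}/2}$); and the vanishing of odd-rank contributions is cleanest from $\sum_{y\in\gf(p)^*}\eta(y)^{r}=0$ for $r$ odd, where $\eta$ is the quadratic character of $\gf(p)$ --- your ``purely imaginary'' argument only applies when $p\equiv 3\pmod 4$ (which does cover the $p=3$ case actually used).

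The genuine gap is that you treat $S(a,b)$ as governed by a single rank $r_{a,b}$ and a single type $\varepsilon_{a,b}$. It is not: $S(a,b)=\sum_y R(ya,yb)+\sum_y R(-ya,yb)$ involves \emph{two} quadratic forms, $Q_{a,b}$ and $Q_{-a,b}$, which are inequivalent in general (no substitution $x\mapsto\mu x$ carries one to a $\gf(p)^*$-multiple of the other, since $-1$ is not a $(p-1)$-th power in $\gf(q)^*$ when $m$ is odd), so their ranks and types may differ. Each $y$-sum separately has the known distribution of \cite{LuoFeng2} (equivalently Table \ref{tab-LuoFeng}), taking values $0$ and $\pm(p-1)p^{(m+1)/2}$ off the origin; but a priori their sum could also take the values $\pm 2(p-1)p^{(m+1)/2}$ (both forms of rank $m-1$ with matching types) or pick up extra zeros (opposite types), and none of these appear in the claimed table. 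Indeed the stated frequencies $(p^{m-1}\pm p^{(m-1)/2})(p^m-1)$ are exactly twice the single-sum frequencies, which encodes the nontrivial assertion that for $(a,b)\neq(0,0)$ at most one of the two $y$-sums is nonzero. Proving this disjointness --- i.e., controlling the \emph{joint} rank/type distribution of the pair $(Q_{a,b},Q_{-a,b})$ --- is the actual content of the lemma; it is what \cite{ZD2} extracts from the moment/correlation computations of \cite{LuoFeng2,YCD}, and it does not follow from the single-form rank counts you invoke. As written, your argument only recovers the distribution of one of the two $y$-sums.
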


\begin{table}[ht]
\caption{Weight distribution I}\label{tab-CGAPN}
\centering
\begin{tabular}{|l|l|}
\hline
\hline
Weight $w$    & No. of codewords $A_w$  \\ \hline
  $0$ & $1$ \\
  \hline
 $(p-1)p^{m-1}-\frac{p-1}{2} p^{(m-1)/2 }$ & $(p^m-1)(p^{m-1}+p^{(m-1)/ 2})$\\
\hline
$(p-1)p^{m-1}$ & $(p^m-1)(p^m-2p^{m-1}+1)$\\
\hline
$(p-1)p^{m-1}+\frac{p-1}{2} p^{(m-1)/ 2}$ & $(p^m-1)(p^{m-1}-p^{(m-1)/ 2})$\\
\hline
\hline
\end{tabular}
\begin{tablenotes}
\end{tablenotes}
\end{table}

The following lemma is proved in \cite{DH12} and will be employed in the sequel.

\begin{lemma}\label{lem-DH12}
Let $v \not\in \ttC_1$ and $\ell_v=|\ttC_v|=m$. Then the dual $\C_{(1,v,3,m)}^\perp$ of the
ternary cyclic code $\C_{(1,v,3,m)}$
has parameters $[3^m-1, 3^m-1-2m, 4]$ if and only if the following conditions are satisfied:
\begin{itemize}
\item[C1:] $v$ is even;
\item[C2:] the equation $(-x-1)^v+x^v+1=0$ has the only solution $x=1$ in $\gf(q)^*$; and
\item[C3:] the equation $(x+1)^v-x^v-1=0$ has the only solution $x=0$ in $\gf(q)$.
\end{itemize}
\end{lemma}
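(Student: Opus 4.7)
The plan is to characterize when $\C_{(1,v,3,m)}^\perp$ has minimum distance exactly $4$ by analyzing small $\gf(3)$-linear dependencies among the columns of a parity-check matrix. A natural parity-check matrix of $\C_{(1,v,3,m)}^\perp$ is a generator matrix of $\C_{(1,v,3,m)}$, whose columns are $H_i = (\alpha^i, \alpha^{iv})^T \in \gf(q)^2$ for $0 \le i \le n-1$. The hypotheses $v \not\in \ttC_1$ and $\ell_v = m$ already force $\C_{(1,v,3,m)}$ to have dimension $2m$, giving $\C^\perp$ the claimed length and dimension; the nontrivial claim is the minimum distance. Equivalently, the minimum distance is the smallest $d$ for which there exist distinct positions $i_1,\dots,i_d$ and coefficients $c_l \in \gf(3)^*$ with $\sum_l c_l \alpha^{i_l} = 0$ and $\sum_l c_l \alpha^{i_l v} = 0$ in $\gf(q)$.

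For the first step, I would show minimum distance $\ge 3$ iff C1: a $2$-column dependency forces $\alpha^{j-i} = -c_i/c_j \in \gf(3)^* = \{\pm 1\}$, hence $j - i \equiv n/2 \pmod n$; the second coordinate then requires $\alpha^{(j-i)(v-1)} = 1$, i.e., $(n/2)(v-1) \equiv 0 \pmod n$, which holds iff $v$ is odd. Assuming C1, I would next show minimum distance $\ge 4$ iff C2 and C3. After applying an overall sign flip and permuting positions, every coefficient triple in $(\gf(3)^*)^3$ reduces to either $(1,1,1)$ or $(1,1,-1)$ (the multiset $\{1,-1,-1\}$ is the negation of $\{-1,1,1\}$, a permutation of $(1,1,-1)$). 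Setting $y = \alpha^{i_2-i_1}$ and $z = \alpha^{i_3-i_1}$, the case $(1,1,1)$ gives $z = -1-y$ and, using $(-1-y)^v = (y+1)^v$ for $v$ even, the equation $(y+1)^v + y^v + 1 = 0$, i.e., the equation of C2; the case $(1,1,-1)$ gives $z = 1+y$ and $(y+1)^v - y^v - 1 = 0$, the equation of C3. One then checks that in the admissible range $y \in \gf(q) \setminus \{0, 1, -1\}$ (forced by $y \ne 0$, $z \ne 0$, and $i_1 \ne i_2$), solutions of the first equation correspond exactly to the extra solutions forbidden by C2, and solutions of the second to those forbidden by C3.

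Finally, for minimum distance $\le 4$: I would apply the sphere-packing (Hamming) bound to the ternary $[n, n-2m, d]$ code $\C^\perp$, giving $V_3(n, \lfloor (d-1)/2 \rfloor) \le 3^{2m}$. If $d \ge 5$, then $V_3(n, 2) = 1 + 2n + 4\binom{n}{2} = 2n^2 + 1$, and substituting $n = 3^m - 1$ reduces the bound to $(3^m - 1)(3^m - 3) \le 0$, impossible for $m \ge 2$; hence $d \le 4$. Combining the three steps establishes both directions of the lemma. The main obstacle I anticipate is the bookkeeping in the 3-column analysis: carefully reducing every coefficient triple to one of the two canonical patterns, and tracking the three boundary values $y \in \{0, \pm 1\}$ (verifying that each is either excluded by the distinct-position constraint or is a non-solution of the relevant equation), so that C2 and C3 correspond exactly, rather than only approximately, to the non-existence of the two dependency types.
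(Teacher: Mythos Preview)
The paper does not give its own proof of this lemma; it simply quotes the result from \cite{DH12} (Ding--Helleseth). Your approach is correct and is essentially the same argument as in that reference: identify codewords of $\C_{(1,v,3,m)}^\perp$ with $\gf(3)$-linear dependencies among the columns $(\alpha^i,\alpha^{iv})$ of the $\gf(q)$-valued parity-check matrix, classify the weight-$2$ and weight-$3$ dependencies (the two coefficient patterns $(1,1,1)$ and $(1,1,-1)$ yielding exactly the equations in C2 and C3), and use the sphere-packing bound to cap the distance at~$4$. The boundary checks you flag (that $x=1$ is always a solution of the C2 equation and $x=0$ of the C3 equation, while $x=\pm1$ give no spurious solutions under C1) are exactly what is needed to make the correspondence exact.
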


\subsection{The first family of three-weight cyclic codes}

In this subsection, we study the cyclic codes $\C_{(1, v, p, m)}$,
where  $m$ is odd, $p=3$, and $v=({3^{m+1}-1})/4$.
The parameters of the codes are described in the following theorem.

\begin{theorem}\label{thm-class-1}
Let $m$ be odd, $p=3$, and $v=({3^{m+1}-1})/4$. Then $\C_{(1, v, p, m)}$
is a $[p^m-1, 2m]$ cyclic code over $\gf(p)$ with the weight distribution in Table
\ref{tab-CGAPN}.
\end{theorem}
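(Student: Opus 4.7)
The plan is to compute $\wt(\bc(a,b))$ via formula (\ref{eqn-wtformula}), reducing the exponential sum $T_{(1,v)}(a,b)$ to one of the form $R(a,b)=\sum_{x\in\gf(q)}\chi(ax^{p^h+1}+bx^2)$ treated in Lemma \ref{Lemma-Quadratic-expo-1}, and then translating the distribution supplied by that lemma into the desired weight distribution.

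As a preliminary, I would verify the dimension. Since $4v=3^{m+1}-1=3(3^m-1)+2$ and $3^m-1$ is twice an odd integer (because $m$ is odd), a short computation gives $\gcd(v,3^m-1)=2$. Hence $|\ttC_v|$ equals the multiplicative order of $3$ modulo $(3^m-1)/2$; the bound $3^i-1<3^m-1$ for $i<m$, combined with the parity observation that $2\cdot 3^i\not\equiv 3^m+1\pmod 4$, forces this order to be $m$. Moreover $3\nmid 3^{m+1}-1$ yields $3\nmid v$, so $v$ cannot equal any power $3^i$; hence $\ttC_1\cap\ttC_v=\emptyset$ and $\dim\C_{(1,v,3,m)}=\ell_1+\ell_v=2m$.

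The core step is to apply Lemma \ref{Lemma-key} with $s=4$ and $\lambda=-1$. The hypothesis $\gcd(s,q-1)=2$ holds because $3^m-1\equiv 2\pmod 4$ for $m$ odd; $-1$ is a nonsquare in $\gf(q)^*$ since $q\equiv 3\pmod 4$; and $\lambda^v=(-1)^v=1$ because $8\mid 3^{m+1}-1=4v$ forces $v$ to be even. The exponents collapse as $x^{sv}=x^{4v}=x^2$ (from $4v\equiv 2\pmod{q-1}$) and $x^{su}=x^4=x^{3+1}$, so Lemma \ref{Lemma-key} yields
\[
T_{(1,v)}(a,b)=\tfrac12\bigl(R(a,b)+R(-a,b)\bigr),
\]
where $R$ is exactly the sum in Lemma \ref{Lemma-Quadratic-expo-1} for $p=3$ and $h=1$ (and $\gcd(m,h)=1$ is automatic). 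Substituting into (\ref{eqn-wtformula}) gives
\[
\wt(\bc(a,b))=(p-1)p^{m-1}-\tfrac{1}{2p}\sum_{y\in\gf(p)^*}\bigl(R(ya,yb)+R(-ya,yb)\bigr),
\]
so the weight distribution is the push-forward of the distribution of $\sum_{y}(R(ya,yb)+R(-ya,yb))$ in Lemma \ref{Lemma-Quadratic-expo-1} under the affine map $S\mapsto(p-1)p^{m-1}-S/(2p)$. A mechanical substitution sends the four values $2(p-1)p^m$, $(p-1)p^{(m+1)/2}$, $0$, $-(p-1)p^{(m+1)/2}$ to $0$, $(p-1)p^{m-1}-\tfrac{p-1}{2}p^{(m-1)/2}$, $(p-1)p^{m-1}$, $(p-1)p^{m-1}+\tfrac{p-1}{2}p^{(m-1)/2}$, respectively, reproducing the four rows of Table \ref{tab-CGAPN} with the matching frequencies (which sum to $p^{2m}$, the number of codewords).

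The only nontrivial step is identifying the correct pair $(s,\lambda)=(4,-1)$: it simultaneously satisfies the $\gcd$ constraint of Lemma \ref{Lemma-key}, collapses $x^{sv}$ to $x^2$, and forces $\lambda^v=1$ so that the second summand lands inside the scope of Lemma \ref{Lemma-Quadratic-expo-1}. Once this is recognized, everything else is bookkeeping.
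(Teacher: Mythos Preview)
Your proposal is correct and follows essentially the same route as the paper: apply Lemma~\ref{Lemma-key} with $s=3^h+1=4$ (so $h=1$) and $\lambda=-1$, use $sv\equiv 2\pmod{q-1}$ together with $v$ even to reduce $T_{(1,v)}$ to $\tfrac12(R(a,b)+R(-a,b))$, and then read off the weight distribution from Lemma~\ref{Lemma-Quadratic-expo-1}. The only addition is your explicit verification that $\ell_v=m$ and $\ttC_1\cap\ttC_v=\emptyset$, which the paper leaves implicit; one minor quibble is that your argument ``$3\nmid v$ so $v\neq 3^i$'' does not by itself exclude $i=0$, but this is immediate since $v$ is even while every $3^i$ is odd.
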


\begin{proof}
Let  $h=1$ and $s=3^{h}+1$.  Then $\gcd(s, 3^m-1)=2$ since $m$ is odd. It is easy to check that $sv\equiv 2~(\bmod~3^m-1)$.
 Noticing that $v$ is even and $-1$ is a nonsquare in $\gf(q)$.
By Lemma \ref{Lemma-key}, we have
\begin{eqnarray*}
T_{(1,v)}(a,b)
&=& \frac{1}{2} \left( R_{(1,v)}(a,b) + R_{(1,v)}(-a,b)  \right)
\end{eqnarray*}
where
$$
R_{(1,v)}(a,b)=\sum_{x \in \gf(q)} \chi(ax^{3^{h}+1}+bx^2).
$$
It then follows from (\ref{eqn-wtformula}) that
\begin{eqnarray}\label{eqn-APNcase1}
\wt(\bc(a,b))= 2\times 3^{m-1} - \frac{1}{6} \sum_{y\in \gf(3)^*}\left(R_{(1,v)}(y a,y b)+R_{(1,v)}(-y a,y b)\right).
\end{eqnarray}
Note that $\gcd(m,h)=\gcd(m,1)=1$, the weight distribution of the code $\C_{(1, v, 3, m)}$ then
follows  from Equation (\ref{eqn-APNcase1}) and Lemma \ref{Lemma-Quadratic-expo-1}.
\end{proof}

\begin{example}
Let $p=3$ and $m=3$. Let $\alpha$ be the generator of $\gf(p^m)^*$ with $\alpha^3+2\alpha+1=0$.
Then $v=20$ and $\C_{(1,v,p,m)}$ is a $[26,6,15]$ code over $\gf(3)$ with parity-check polynomial
$x^6 + 2x^3 + 2x^2 + x + 2$ and weight enumerator
$1+ 312 y^{15}+260 y^{18}+156 y^{21}.$
It has the same parameters as the best known cyclic codes
in the Database, and is optimal.
\end{example}

\begin{example}
Let $p=3$ and $m=5$. Let $\alpha$ be the generator of $\gf(p^m)^*$ with $\alpha^5 + 2\alpha + 1=0$.
 Then $v=182$ and $\C_{(1,v,p,m)}$  is a $[242,10,153]$ code over $\gf(3)$ with parity-check polynomial
 $x^{10} + 2x^9 + 2x^8 + 2x^7 + 2x^5 + x^4 + 2x^3 + x^2 + x + 2$
 and weight enumerator
\begin{eqnarray*}
1+ 21780 y^{153}+19844 y^{162}+17424 y^{171}.
\end{eqnarray*}
It has the same parameters as the best known cyclic codes
in the Database. It is optimal or almost optimal since the upper bound on the minimal distance of  any ternary linear code with
length $242$ and dimension $6$ is $154$.
\end{example}

The following theorem describes the parameters of the dual code $\C_{(1,v,3,m)}^\perp$ of the code
$\C_{(1,v,3,m)}$ in Theorem \ref{thm-class-1}.

\begin{theorem}\label{thm-class-1d}
Let $m$ be odd, $p=3$, and $v=({3^{m+1}-1})/4$. Then $\C_{(1, v, p, m)}^\perp$
is a $[3^m-1, 3^m-1-2m, 4]$ cyclic code over $\gf(3)$.
\end{theorem}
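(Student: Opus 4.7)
The plan is to invoke Lemma \ref{lem-DH12}, which reduces the claim to checking three explicit conditions on $v = (3^{m+1}-1)/4$, once the hypotheses $v \notin \ttC_1$ and $\ell_v = m$ are in place. The preconditions are easy: one checks $v \neq 3^j$ by a divisibility argument (the equation $3^{m+1}-1 = 4\cdot 3^j$ forces $j=0$, i.e. $v=1$, which is impossible), and $\ell_v = m$ follows because $\gcd(v,3^m-1)=2$ (using $\gcd(3^{m+1}-1, 3^m-1)=2$), so $\ell_v$ equals the order of $3$ modulo $(3^m-1)/2$; that order divides $m$, and any proper divisor $j<m$ would give $(3^m-1)/2 \mid 3^j-1$, contradicting $3^j - 1 < (3^m-1)/2$ for $j \leq m-1$.

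Condition C1 is immediate: since $m$ is odd, $m+1$ is even, and $3^2\equiv 1\pmod 8$ gives $3^{m+1}\equiv 1\pmod 8$, whence $v = (3^{m+1}-1)/4$ is even. For C2 and C3, the key identity is $4v \equiv 2 \pmod{3^m-1}$, obtained from $4v = 3^{m+1}-1 = 3(3^m-1)+2$. This means $z^{4v}=z^2$ for every $z\in\gf(q)^*$. Setting $y := x^v$, one has $y^4 = x^2$, so $y^2 = \pm x$. Moreover, in characteristic $3$, $(y+1)^4 = y^4+y^3+y+1$, which is the engine that allows raising both sides of each condition to the fourth power and simplifying to a low-degree identity in $x$ alone.

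Concretely, for C3 assume $x\ne 0,-1$ and raise $(x+1)^v = x^v+1$ to the fourth power; the left side becomes $(x+1)^2$, and the right side becomes $x^2 + y^3 + y + 1$, yielding $y(y^2+1) = -x$. Substituting the two cases $y^2 = x$ and $y^2 = -x$ and eliminating $y$ reduces, respectively, to $x^2+x+1=0$ and $(x+1)^2=0$ in $\gf(3)$, which in characteristic $3$ force $x=1$ and $x=-1$; one then plugs these (and $x=-1$) back into the original C3 equation and uses $v$ even to see neither satisfies it, leaving only $x=0$. The verification of C2 is parallel: since $v$ is even, $(-x-1)^v=(x+1)^v$, so after raising to the fourth power one obtains the \emph{same} derived identity as for C3, hence the same two candidates $x=\pm 1$; a direct check using $2^v=1$ shows that $x=1$ satisfies C2 (this is exactly the solution permitted by the lemma) while $x=-1$ does not.

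I expect the main technical obstacle to be the case analysis $y^2 = \pm x$ after the fourth-power reduction: it requires carefully tracking the degenerate points ($x=0,\pm 1$, $x+1=0$) so that no solution escapes through a division-by-zero step, and it relies crucially on the characteristic-$3$ coincidences $x^2+x+1 = (x-1)^2$ and $(x-1)^2 = x^2+x+1$, which collapse the quadratics to a single point. Once these are handled correctly, the application of Lemma \ref{lem-DH12} delivers the stated parameters $[3^m-1,\,3^m-1-2m,\,4]$.
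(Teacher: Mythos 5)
Your proposal is correct and rests on the same two pillars as the paper's own proof: the reduction to conditions C1--C3 of Lemma \ref{lem-DH12}, and the arithmetic identity $4v=3(3^m-1)+2$, i.e.\ $4v\equiv 2\pmod{q-1}$ with $\gcd(4,q-1)=2$. The execution, however, is genuinely different. The paper parametrizes the unknowns: since every element of $\gf(q)^*$ is $\pm$ a fourth power and $-1$ is a nonsquare, it writes $x=\pm x_1^4$ and $x+1=\pm y_1^4$, splits into four sign cases, and solves each small system in $x_1^2,y_1^2$ (e.g.\ $y_1^4\pm x_1^4=\pm 1$ together with $y_1^2+x_1^2=-1$) by factoring $y_1^4-x_1^4$ or expanding $(x_1^2+y_1^2)^2$. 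You instead raise the defining equation itself to the fourth power, so that $(x+1)^{4v}=(x+1)^2$ while $(x^v+1)^4$ expands via Frobenius, leaving the single relation $-x=y(y^2+1)$ with $y=x^v$ and only the two cases $y^2=\pm x$; this halves the case analysis, makes the characteristic-$3$ collapse to $(x\mp 1)^2=0$ transparent, and handles C2 and C3 with one computation (the two conditions differ only by the sign $(-(y+1))^4=(y+1)^4$, then a direct check of the finitely many candidates $x\in\{0,\pm 1\}$ separates them). You also verify the hypotheses $v\notin\ttC_1$ and $\ell_v=m$ of Lemma \ref{lem-DH12}, which the paper leaves implicit. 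Two cosmetic points: your list of ``characteristic-$3$ coincidences'' states the same identity twice, and it is worth saying explicitly that raising to the fourth power only yields a superset of candidate solutions --- harmless here because every candidate is checked against the original equation.
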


\begin{proof}
The dimension of $\C_{(1, v, p, m)}^\perp$ follows from that of $\C_{(1, v, p, m)}$. So we need to prove
that the minimum distance $d^\perp$ of $\C_{(1, v, p, m)}^\perp$ is 4. To this end, we prove
that the three conditions in Lemma \ref{lem-DH12} are met.

Obviously, $v$ is even. So Condition C1 in Lemma \ref{lem-DH12} is satisfied. 

We now consider Condition
C2 in Lemma \ref{lem-DH12}  and study the solutions $x \in \gf(q)$ of the following equation
\begin{equation}\label{eqn-c1stc1.1}
(x+1)^v + x^v = -1.
\end{equation}
It is clear that Equation (\ref{eqn-c1stc1.1}) is equivalent to the following
\begin{equation}\label{eqn-c1stc1.2}
y-x=1, \ \ y^v + x^v =- 1.
\end{equation}
Note that $-1$ is a nonsquare in $\gf(q)$ and $\gcd(4, q-1)=2$ as $m$ is odd. We now consider the solutions
$(x, y) \in \gf(q)^2$ of (\ref{eqn-c1stc1.2}) by distinguishing among the following four cases.

\subsubsection*{Case 1, $x=x_1^4$ and $y=y_1^4$ for some $(x_1, y_1) \in \gf(q)^2$}

In this case, $(x_1, y_1)$ is a solution of
\begin{equation}\label{eqn-c1stc1.3}
y_1^4-x_1^4=1, \ \ y_1^2 + x_1^2 = -1.
\end{equation}
It then follows that
$$
y_1^2+x_1^2=-1, \ \ y_1^2 - x_1^2 = -1.
$$
Hence $y_1^2=-1$, which is impossible as $-1$ is not a square in $\gf(3^m)$.

\subsubsection*{Case 2, $x=-x_1^4$ and $y=-y_1^4$ for some $(x_1, y_1) \in \gf(q)^2$}

In this case, $(x_1, y_1)$ is a solution of
\begin{equation}\label{eqn-c1stc1.4}
x_1^4-y_1^4=1, \ \ y_1^2 + x_1^2 = -1.
\end{equation}
It then follows that
$$
y_1^2+x_1^2=-1, \ \ y_1^2 - x_1^2 = 1.
$$
Hence $y_1=0$ and $-1=x_1^2$, which is impossible as $-1$ is not a square in $\gf(q)$.

\subsubsection*{Case 3, $x=x_1^4$ and $y=-y_1^4$ for some $(x_1, y_1) \in \gf(q)^2$}

In this case, $(x_1, y_1)$ is a solution of
\begin{equation}\label{eqn-c1stc1.5}
y_1^4+x_1^4=-1, \ \ y_1^2 + x_1^2 = -1.
\end{equation}
It then follows that
$$
1=(y_1^2 + x_1^2)^2=-1+2(x_1y_1)^2
$$
and $1=(x_1y_1)^2$. Hence $x_1^2$ and $y_1^2$ are the solutions of $z^2-2z+1=0$. Thus,
$x_1^2=y_1^2=1$.  Whence $x=x_1^4=1$.

\subsubsection*{Case 4, $x=-x_1^4$ and $y=y_1^4$ for some $(x_1, y_1) \in \gf(q)^2$}

In this case, $(x_1, y_1)$ is a solution of
\begin{equation}\label{eqn-c1stc1.6}
y_1^4+x_1^4=1, \ \ y_1^2 + x_1^2 = -1.
\end{equation}
It then follows that
$$
1=(y_1^2 + x_1^2)^2=1+2(x_1y_1)^2.
$$
Hence $0=(x_1y_1)^2$. Therefore $x_1=0$ or $y_1=0$. Hence $-1 =y_1^2$ or $-1 =x_1^2$,
which are impossible.

Summarizing the four cases above, we proved that Condition C3  in Lemma \ref{lem-DH12}
is satisfied.

One can similarly prove that Condition C3  of Lemma \ref{lem-DH12} is also met.  

The desired conclusion of the parameters of the code then follows from  Lemma \ref{lem-DH12}.
\end{proof}

Note that the codes $\C_{(1,v,p,m)}^\perp$ of Theorem \ref{thm-class-1d} are optimal in the sense that
the minimum distance of any ternary linear code of length $3^m-1$ and dimension $3^m-1-2m$ is at
most 4 due to the sphere-packing bound.

\begin{example}
Let $p=3$ and $m=3$. Let $\alpha$ be the generator of $\gf(p^m)^*$ with $\alpha^3+2\alpha+1=0$.
Then $v=20$ and $\C_{(1,v,p,m)}^\perp$ is a $[26,20,4]$ code over $\gf(3)$ with generator polynomial
$x^6 + 2x^3 + 2x^2 + x + 2$.
\end{example}

\subsection{The second family of three-weight cyclic codes}

In this subsection, we analyze the cyclic codes $\C_{(1, v, p, m)}$, where $m \equiv 7 \pmod{8}$, $p=3$,
and $v=\left(3^{(m+1)/8}-1\right)\left(3^{(m+1)/4}+1\right)\left(3^{(m+1)/2}+1\right)$.
The parameters of the codes are described in the following theorem.

\begin{theorem}\label{thm-class-6}
Let $m \equiv 7 \pmod{8}$, $p=3$ and $v=(3^{(m+1)/8}-1)(3^{(m+1)/4}+1)(3^{(m+1)/2}+1)$.
Then $\C_{(1, v, 3, m)}$
is a $[3^m-1, 2m]$ cyclic code over $\gf(3)$ with the weight distribution in Table
\ref{tab-CGAPN}.
\end{theorem}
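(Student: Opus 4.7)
The plan is to mimic the proof of Theorem \ref{thm-class-1} by finding a suitable integer $h$ so that $s := 3^h + 1$ satisfies $sv \equiv 2 \pmod{3^m-1}$, $\gcd(s, 3^m-1)=2$, and $\gcd(m,h)=1$. Then Lemma \ref{Lemma-key} reduces $T_{(1,v)}(a,b)$ to the quadratic-form exponential sum $R_{(1,v)}(a,b)=\sum_{x\in\gf(q)}\chi(ax^{3^h+1}+bx^2)$, and the distribution in Table \ref{tab-CGAPN} follows immediately from Lemma \ref{Lemma-Quadratic-expo-1} together with the weight formula (\ref{eqn-wtformula}).

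The correct choice is $h=(m+1)/8$, which is a positive integer because $m\equiv 7\pmod{8}$. Writing $k=(m+1)/8$, I would verify the telescoping identity
\[
sv=(3^{k}+1)(3^{k}-1)(3^{2k}+1)(3^{4k}+1)=(3^{2k}-1)(3^{2k}+1)(3^{4k}+1)=3^{8k}-1=3^{m+1}-1,
\]
so $sv\equiv 2\pmod{3^m-1}$. For the gcd condition, let $g=\gcd(3^h+1,3^m-1)$; then the order of $3$ modulo $g$ divides $\gcd(2h,m)=\gcd((m+1)/4,m)$, and because every common divisor of $m$ and $(m+1)/4$ also divides $m+1$, hence divides $\gcd(m,m+1)=1$, we conclude $g\mid 3-1=2$, and since $3^h+1$ is even, $g=2$. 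The same argument (with $8$ in place of $4$) shows $\gcd(m,h)=\gcd(m,(m+1)/8)=1$, which is the hypothesis required by Lemma \ref{Lemma-Quadratic-expo-1}.

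Next I would note that $v$ is even (the factor $3^{(m+1)/8}-1$ alone is even), and that $-1$ is a nonsquare of $\gf(3^m)$ since $m$ is odd gives $3^m\equiv 3\pmod 4$. Consequently, applying Lemma \ref{Lemma-key} with $\lambda=-1$ yields
\[
T_{(1,v)}(a,b)=\frac{1}{2}\bigl(R_{(1,v)}(a,b)+R_{(1,v)}(-a,b)\bigr),
\]
which is exactly the identity used in the proof of Theorem \ref{thm-class-1}. Substituting into (\ref{eqn-wtformula}) gives
\[
\wt(\bc(a,b))=2\cdot 3^{m-1}-\frac{1}{6}\sum_{y\in\gf(3)^*}\bigl(R_{(1,v)}(ya,yb)+R_{(1,v)}(-ya,yb)\bigr),
\]
and Lemma \ref{Lemma-Quadratic-expo-1}, applied with the current value of $h$ (legitimate since $\gcd(m,h)=1$), converts its four-valued frequency table directly into the three-weight distribution of Table \ref{tab-CGAPN}.

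The main obstacle is purely arithmetic: establishing $sv\equiv 2\pmod{3^m-1}$ through the four-step cyclotomic factorization above, and justifying $\gcd(3^h+1,3^m-1)=2$ and $\gcd(m,h)=1$ from the single observation $\gcd(m,m+1)=1$. Once these congruence and coprimality statements are in hand, the rest of the argument is formally identical to the proof of Theorem \ref{thm-class-1}, so no new exponential-sum work is needed. One should also verify that $\ttC_1\cap\ttC_v=\emptyset$ and $\ell_v=m$ so that the dimension is indeed $2m$; this is immediate once $v$ is shown not to lie in the $3$-cyclotomic coset of $1$ modulo $3^m-1$, which can be read off the identity $sv=3^{m+1}-1$ (any coincidence would force $v$ to equal $2\cdot 3^j$ modulo $3^m-1$ for some $j$, easily excluded for the given $v$).
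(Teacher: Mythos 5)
Your proposal is correct and follows essentially the same route as the paper: take $h=(m+1)/8$, $s=3^h+1$, verify $sv\equiv 2\pmod{3^m-1}$, $\gcd(s,3^m-1)=2$ and $\gcd(m,h)=1$, apply Lemma \ref{Lemma-key} with $\lambda=-1$, and conclude via (\ref{eqn-wtformula}) and Lemma \ref{Lemma-Quadratic-expo-1}. You simply supply the telescoping and gcd computations that the paper dismisses as ``easy to check,'' which is a welcome addition but not a different argument.
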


\begin{proof}
Let $h=(m+1)/8$ and $s=3^h+1$.  Since $m \equiv 7 \pmod{8}$, $\gcd(s, 3^m-1)=2$ and $v$ is even.
It is easy to check that $sv\equiv 2~(\bmod~3^m-1)$.
Select $\lambda=-1$ as a nonsquare in $\gf(p^m)$.
Applying Lemma \ref{Lemma-key}, we have
\begin{eqnarray}\label{eqn-T-six-class}
T_{(1,v)}(a,b)
&=& \frac{1}{2} \left( R_{(1,v)}(a,b) + R_{(1,v)}(-a,b)  \right)
\end{eqnarray}
where
$$
R_{(1,v)}(a,b)=\sum_{x \in \gf(q)} \chi(ax^{3^{h}+1}+bx^2).
$$
It then follows from (\ref{eqn-wtformula}) and (\ref{eqn-T-six-class}) that
\begin{eqnarray}\label{eqn-APNcase6}
\wt(\bc(a,b))= 2\times 3^{m-1} - \frac{1}{6} \sum_{y\in \gf(3)^*}\left(R_{(1,v)}(y a,y b)+R_{(1,v)}(-y a,y b)\right).
\end{eqnarray}
The weight distribution of the code $\C_{(1, v, 3, m)}$ then
follows from (\ref{eqn-APNcase6}) and Lemma \ref{Lemma-Quadratic-expo-1}.
\end{proof}

\begin{example}
Let $p=3$ and $m=7$. Let $\alpha$ be a generator of $\gf(q)^*$ with  $\alpha^7 + 2\alpha^2 + 1=0$.
Then $v=1640$ and  $\C_{(1,v,p,m)}$ is a $[2186,14,1431]$ code over $\gf(3)$ with
parity-check polynomial
$$
x^{14} + 2x^{13} + x^{12} + x^{11} + x^9 + 2x^8 + 2x^7 + x^6 + 2x^3 + x^2 + x  + 2
$$
and weight enumerator
\begin{eqnarray*}
1+ 1652616 y^{1431}+1595780 y^{1458}+1534572 y^{1485}.
\end{eqnarray*}
\end{example}

Now we determine the parameters of the dual code $\C_{(1,v,3,m)}^\perp$ of
$\C_{(1,v,3,m)}$ in Theorem \ref{thm-class-6}. To this end, we need the following lemmas whose
proofs are omitted.

\begin{lemma}\label{lem-gamma}
Let $m \equiv 7 \pmod{8}$ and $q=3^m$. Define $h=(m+1)/8$ and $s=3^h+1$. Then all solutions
$(x_1, y_1) \in \gf(q) \times \gf(q)$ of the equation $y_1^2-x_1^2=1$ can be expressed as
\begin{eqnarray}\label{eqn-4all1}
x_1=\theta - \theta^{-1}, \ \  y_1=-\theta -  \theta^{-1}
\end{eqnarray}
where $\theta \in \gf(q)^*$.
In addition, we have
\begin{eqnarray}\label{eqn-4all2}
\left\{
\begin{array}{lll}
x_1^s&=&\theta^s + \theta^{-s} - ( \theta^{3^{(m+1)/8}-1} + \theta^{1-3^{(m+1)/8}} ) \\
y_1^s&=&\theta^s + \theta^{-s} + ( \theta^{3^{(m+1)/8}-1} + \theta^{1-3^{(m+1)/8}} ).
\end{array}
\right.
\end{eqnarray}
\end{lemma}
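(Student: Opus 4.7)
The plan is to verify the two assertions of the lemma separately, both of which reduce to direct manipulations in characteristic $3$ together with a Frobenius computation.

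For the parametrization of the conic $y_1^2 - x_1^2 = 1$, I would first check that the proposed formulas do solve the equation. In characteristic $3$ we have $-2 = 1$, so with $x_1 = \theta - \theta^{-1}$ and $y_1 = -\theta - \theta^{-1}$ one computes $y_1 - x_1 = -2\theta = \theta$ and $y_1 + x_1 = -2\theta^{-1} = \theta^{-1}$, hence $y_1^2 - x_1^2 = (y_1-x_1)(y_1+x_1) = 1$. Conversely, for any solution $(x_1,y_1) \in \gf(q)^2$ I would observe that $y_1 \ne x_1$ (else the left-hand side vanishes), so I could set $\theta := y_1 - x_1 \in \gf(q)^*$ and deduce $\theta^{-1} = y_1 + x_1$ from the defining relation. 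Char-$3$ arithmetic then recovers $x_1 = -2 x_1 = \theta - \theta^{-1}$ and $y_1 = -2 y_1 = -\theta - \theta^{-1}$, so the map $\theta \mapsto (x_1,y_1)$ is surjective onto the solution set, establishing \eqref{eqn-4all1}.

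For the second assertion I would exploit $s = 3^h + 1$ to write $x_1^s = x_1 \cdot x_1^{3^h}$ and $y_1^s = y_1 \cdot y_1^{3^h}$, and then use the additivity of the Frobenius on $\gf(q)$ to obtain
\begin{equation*}
x_1^{3^h} = \theta^{3^h} - \theta^{-3^h}, \qquad y_1^{3^h} = -\theta^{3^h} - \theta^{-3^h}.
\end{equation*}
Multiplying out $(\theta - \theta^{-1})(\theta^{3^h} - \theta^{-3^h})$ and $(-\theta - \theta^{-1})(-\theta^{3^h} - \theta^{-3^h})$, and collecting terms via $\theta^{3^h+1} = \theta^s$, $\theta^{-3^h-1} = \theta^{-s}$, and $3^h = 3^{(m+1)/8}$, yields precisely the two expressions in \eqref{eqn-4all2}. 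The cross-terms in the first product carry opposite signs from those in the second, which accounts for the sign discrepancy between the $x_1^s$ and $y_1^s$ formulas.

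There is no real obstacle here; everything follows from elementary identities in characteristic $3$ and the Frobenius. The only thing to be careful about is bookkeeping the signs of the four monomials in each expansion, which is why the lemma is stated explicitly and set aside from the main computation where it will be applied.
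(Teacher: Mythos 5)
Your proof is correct, and since the paper explicitly omits the proof of this lemma, your argument (parametrizing the conic via $\theta = y_1 - x_1$, $\theta^{-1} = y_1 + x_1$ using $-2=1$ in characteristic $3$, then computing $x_1^s = x_1^{3^h}x_1$ and $y_1^s = y_1^{3^h}y_1$ via the Frobenius) is exactly the standard computation the authors intend. All sign bookkeeping in your expansions checks out.
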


\begin{lemma}\label{lem-beta}
Let $m \equiv 7 \pmod{8}$. Define $h=(m+1)/8$ and $s=3^h+1$.
If $m \equiv 7 \pmod{16}$, then
\begin{eqnarray*}
\left\{ \begin{array}{l}
s \equiv 4 \pmod{8} \\
\gcd(s, q^2-1)=4 \\
\gcd(3^h-1, q^2-1)=2.
\end{array}
\right.
\end{eqnarray*}
If $m \equiv -1 \pmod{16}$, then
\begin{eqnarray*}
\left\{ \begin{array}{l}
s \equiv 2 \pmod{8} \\
\gcd(s, q^2-1)=2 \\
\gcd(3^h-1, q^2-1)=8.
\end{array}
\right.
\end{eqnarray*}

\end{lemma}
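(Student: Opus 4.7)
The plan is to reduce all three claims, under each of the two residue conditions on $m$, to elementary congruences modulo $8$ together with a single application of the classical identity $\gcd(3^a-1, 3^b-1) = 3^{\gcd(a,b)} - 1$. Throughout, I use the relation $8h = m+1$ (equivalently $m = 8h-1$), which comes directly from the definition $h = (m+1)/8$.

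First I would handle $s \bmod 8$. Since $3^2 \equiv 1 \pmod{8}$, the residue of $3^h \pmod 8$ depends only on the parity of $h$. The congruence $m \equiv 7 \pmod{16}$ forces $h \equiv 1 \pmod 2$, so $3^h \equiv 3 \pmod 8$ and $s \equiv 4 \pmod 8$; the congruence $m \equiv -1 \pmod{16}$ forces $h \equiv 0 \pmod 2$, so $3^h \equiv 1 \pmod 8$ and $s \equiv 2 \pmod 8$. This disposes of the first line in each case.

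Next I would compute $\gcd(s, q^2-1) = \gcd(3^h+1, 3^{2m}-1)$. The key observation is that $3$ has multiplicative order exactly $2h$ modulo $3^h+1$ (since $3^h \equiv -1$ and no smaller exponent works). Writing $2m = 16h - 2 = 8(2h) - 2$, I would reduce the exponent to get $3^{2m} \equiv 3^{-2} \pmod{3^h+1}$, so
\[
3^{2m} - 1 \equiv 3^{-2}(1 - 3^2) \equiv -8\cdot 3^{-2} \pmod{3^h+1}.
\]
Since $\gcd(3^h+1, 3) = 1$, this gives $\gcd(3^h+1, 3^{2m}-1) = \gcd(3^h+1, 8)$, and the first step immediately yields $4$ when $m \equiv 7 \pmod{16}$ and $2$ when $m \equiv -1 \pmod{16}$.

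Finally I would compute $\gcd(3^h - 1, q^2-1) = 3^{\gcd(h, 2m)} - 1$ via the standard identity. Using $2m = 16h - 2$, Euclid's algorithm gives $\gcd(h, 2m) = \gcd(h, -2) = \gcd(h, 2)$, which equals $1$ when $h$ is odd and $2$ when $h$ is even. Hence $\gcd(3^h-1, q^2-1)$ equals $3-1 = 2$ under $m \equiv 7 \pmod{16}$ and $3^2 - 1 = 8$ under $m \equiv -1 \pmod{16}$. This completes all six assertions. I do not anticipate any real obstacle: every step is an elementary congruence or a direct application of $\gcd(3^a-1,3^b-1) = 3^{\gcd(a,b)}-1$, and the only mildly delicate point is remembering that the order of $3$ modulo $3^h+1$ is $2h$ rather than $h$, which is what makes the exponent reduction $2m \mapsto -2$ work cleanly.
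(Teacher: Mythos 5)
Your proof is correct and complete: the parity of $h$ follows from $m=8h-1$, the reduction $3^{2m}\equiv 3^{-2}\pmod{3^h+1}$ (using $2m=16h-2$ and $3^{2h}\equiv 1$) cleanly gives $\gcd(3^h+1,3^{2m}-1)=\gcd(3^h+1,8)$, and the identity $\gcd(3^a-1,3^b-1)=3^{\gcd(a,b)}-1$ with $\gcd(h,16h-2)=\gcd(h,2)$ handles the last assertion. The paper explicitly omits its proof of this lemma, so there is nothing to compare against; your argument stands on its own, and the only cosmetic remark is that you do not actually need the order of $3$ modulo $3^h+1$ to be exactly $2h$ — the congruence $3^{2h}\equiv 1\pmod{3^h+1}$ already suffices for the exponent reduction.
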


\begin{lemma}\label{lem-alpha}
Let $m$ be odd and $q=3^m$.
Let $\gamma$ be a generator of $\gf(q^2)$ and let $\epsilon=\gamma^{(q^2-1)/4}$. Then
All solutions of $(x_1, y_1) \in \gf(q^2) \times \gf(q^2)$ of the equation $x_1^2+y_1^2=-1$
can be expressed as
\begin{eqnarray}\label{eqn-bsolu}
x_1=\epsilon (\theta + \theta^{-1}), \ y_1=\theta - \theta^{-1}
\end{eqnarray}
for some $\theta \in \gf(q^2)^*$, where $\epsilon^2=-1$.

Furthermore,
\begin{eqnarray*}
x_1^s &=&  (\theta^s+\theta^{-s} + \theta^{3^h-1} + \theta^{1-3^h} ) \epsilon^s \\
y_1^s &=&  \theta^s+\theta^{-s} - (\theta^{3^h-1} + \theta^{1-3^h}).
\end{eqnarray*}
\end{lemma}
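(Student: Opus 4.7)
The plan is to treat the parametrization and the $s$-th power formulas separately, since each reduces to a mechanical calculation once the right identities are in hand. First I would note that since $\gamma$ is a primitive element of $\gf(q^2)^*$, the element $\epsilon = \gamma^{(q^2-1)/4}$ satisfies $\epsilon^2 = \gamma^{(q^2-1)/2} = -1$; the exponent $(q^2-1)/4$ is an integer because $q^2 = 9^m \equiv 1 \pmod 4$. Beyond this, the only simplifications I will lean on come from being in characteristic $3$, namely $-4 = -1$ and $2 = -1$.

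To verify that every pair from \eqref{eqn-bsolu} is a solution, I would substitute directly:
\[
x_1^2 + y_1^2 = \epsilon^2(\theta + \theta^{-1})^2 + (\theta - \theta^{-1})^2 = -(\theta^2 + 2 + \theta^{-2}) + (\theta^2 - 2 + \theta^{-2}) = -4 = -1,
\]
where the last equality is in $\gf(3)$. To show these exhaust all solutions I would count. The map $\theta \mapsto (\epsilon(\theta+\theta^{-1}),\, \theta - \theta^{-1})$ is injective, because $\epsilon^{-1}x_1 + y_1 = 2\theta = -\theta$ in characteristic $3$, so $\theta = -\epsilon^{-1} x_1 - y_1$ is uniquely recovered (and automatically nonzero, since $\theta = 0$ would force $x_1^2 + y_1^2 = 0 \ne -1$). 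Meanwhile the solution set has cardinality $q^2 - 1$: factor $x_1^2 + y_1^2 = (x_1 + \epsilon y_1)(x_1 - \epsilon y_1)$, set $u = x_1 + \epsilon y_1$, and observe that every $u \in \gf(q^2)^*$ determines a unique solution via $x_1 - \epsilon y_1 = -u^{-1}$. Since domain and codomain both have size $q^2 - 1$, the parametrization is a bijection.

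For the $s$-th power formulas, with $s = 3^h + 1$ I would write $x_1^s = x_1^{3^h}\cdot x_1$ and use that $a \mapsto a^{3^h}$ is a ring automorphism on $\gf(q^2)$. Thus $x_1^{3^h} = \epsilon^{3^h}(\theta^{3^h} + \theta^{-3^h})$ and $y_1^{3^h} = \theta^{3^h} - \theta^{-3^h}$, and expanding the products gives
\[
x_1^s = \epsilon^{s}\bigl(\theta^s + \theta^{-s} + \theta^{3^h-1} + \theta^{1 - 3^h}\bigr), \qquad y_1^s = \theta^s + \theta^{-s} - \bigl(\theta^{3^h-1} + \theta^{1-3^h}\bigr),
\]
matching the claimed identities. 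The only genuinely content-bearing step is the counting argument for surjectivity; everything else is a direct computation, which makes this lemma an analogue of Lemma \ref{lem-gamma} obtained by replacing the hyperbola $y_1^2 - x_1^2 = 1$ with the circle $x_1^2 + y_1^2 = -1$ via the twist $\epsilon$.
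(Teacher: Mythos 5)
Your proof is correct and complete. The paper actually omits the proofs of Lemmas \ref{lem-gamma}--\ref{lem-alpha} entirely, so there is no argument to compare against; your route — direct verification that the parametrized pairs are solutions, injectivity via $\theta=-(\epsilon^{-1}x_1+y_1)$, and the count of $q^2-1$ solutions from the factorization $x_1^2+y_1^2=(x_1+\epsilon y_1)(x_1-\epsilon y_1)$ — cleanly establishes that the parametrization is exhaustive, and the Frobenius computation for $x_1^s$, $y_1^s$ with $s=3^h+1$ checks out.
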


The following theorem describes the parameters of the dual code $\C_{(1,v,3,m)}^\perp$ of
$\C_{(1,v,3,m)}$ in Theorem \ref{thm-class-6}.

\begin{theorem}\label{thm-class-6d}
Let $m \equiv 7 \pmod{8}$, $p=3$, and $v=(3^{(m+1)/8}-1)(3^{(m+1)/4}+1)(3^{(m+1)/2}+1)$. Then $\C_{(1, v, p, m)}^\perp$
is a $[3^m-1, 3^m-1-2m, 4]$ cyclic code over $\gf(3)$.
\end{theorem}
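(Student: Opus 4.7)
The plan is to verify the three conditions C1, C2, C3 of Lemma \ref{lem-DH12}, mirroring the strategy used in the proof of Theorem \ref{thm-class-1d}. Since the first factor $3^{(m+1)/8} - 1$ of $v$ is even, $v$ itself is even, so C1 holds; this evenness also lets us rewrite condition C2 as $(x+1)^v + x^v = -1$.

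The key arithmetic input, already established in the proof of Theorem \ref{thm-class-6}, is that $s = 3^h + 1$ with $h = (m+1)/8$ satisfies $\gcd(s, q-1) = 2$ and $sv \equiv 2 \pmod{q-1}$. Because $m$ is odd, $-1$ is a non-square in $\gf(q)$, so the image of $x \mapsto x^s$ is the subgroup of squares and every element of $\gf(q)^*$ can be written as $x_1^s$ or $-x_1^s$ for some $x_1 \in \gf(q)^*$. Setting $y := x+1$, I would break each of the equations $(x+1)^v + x^v = -1$ (for C2) and $(x+1)^v - x^v = 1$ (for C3) into four subcases according to the signs in $x = \pm x_1^s$ and $y = \pm y_1^s$. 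In every subcase, raising to the $v$-th power collapses $x^v$ to $x_1^2$ and $y^v$ to $y_1^2$ (thanks to $sv \equiv 2 \pmod{q-1}$ and $v$ even), so the $v$-th power equation becomes $x_1^2 + y_1^2 = -1$ for C2 and $y_1^2 - x_1^2 = 1$ for C3, while the linear constraint $y - x = 1$ becomes one of $\pm y_1^s \pm x_1^s = \pm 1$.

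Next, I would parametrize the two conics via Lemma \ref{lem-alpha} (for C2, over $\gf(q^2)$) and Lemma \ref{lem-gamma} (for C3, over $\gf(q)$), and substitute the closed-form expressions for $x_1^s$ and $y_1^s$ supplied by these lemmas into the linear constraint of the active subcase. Each subcase then reduces to a Laurent-polynomial equation in the single parameter $\theta$. At this point Lemma \ref{lem-beta} is used to simplify: depending on whether $m \equiv 7$ or $m \equiv -1 \pmod{16}$, the scalar $\epsilon^s$ equals $1$ or $-1$ and $\gcd(3^h-1, q^2-1)$ differs, forcing the Laurent equation into one of two recognizable forms whose only admissible roots $\theta$ pull back, via the inverse substitution, to $x = 1$ (for C2) or $x = 0$ (for C3).

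The main obstacle will be the bookkeeping in this last step: eight subcases, each split further into the two congruence classes of $m \pmod{16}$, must be carried through and shown to yield no spurious solution. A secondary technicality for C2 is that the parametrization from Lemma \ref{lem-alpha} lives in $\gf(q^2)$, so one must additionally impose the Frobenius invariance $x_1, y_1 \in \gf(q)$ to single out the $\gf(q)$-points of the conic before counting roots in $\theta$. Once these computations are organized, the parameters $[3^m-1,\,3^m-1-2m,\,4]$ follow from Lemma \ref{lem-DH12}, using the dimension already determined in Theorem \ref{thm-class-6}.
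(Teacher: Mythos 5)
Your proposal matches the paper's proof: both verify conditions C1--C3 of Lemma \ref{lem-DH12} by writing $x=\pm x_1^s$, $y=\pm y_1^s$ with $s=3^{(m+1)/8}+1$ (using $\gcd(s,q-1)=2$ and $sv\equiv 2\pmod{q-1}$), reducing each of the four sign subcases to the conics $y_1^2-x_1^2=1$ or $x_1^2+y_1^2=-1$, and then substituting the $\theta$-parametrizations of Lemmas \ref{lem-gamma} and \ref{lem-alpha} (with Lemma \ref{lem-beta} handling the $m\bmod 16$ split) to show the only solutions are $x=0$ and $x=1$ respectively. The paper writes out only the C3 cases and declares C2 ``similar,'' so your plan is if anything slightly more explicit about the $\gf(q^2)$ parametrization needed for C2, but the route is the same.
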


\begin{proof}
The dimension of $\C_{(1, v, p, m)}^\perp$ follows from that of $\C_{(1, v, p, m)}$. So we need to prove
that the minimum distance $d^\perp$ of $\C_{(1, v, p, m)}^\perp$ is 4. To this end, we prove
that the three conditions in Lemma \ref{lem-DH12} are met.

Define $h=(m+1)/8$ and $s=3^h+1$. It is easily seen that $\gcd(s, q-1)=2$ and $sv \equiv 2 \pmod{q-1}$.
Note that $-1$ is a nonsquare in $\gf(q)$  as $m$ is odd.

Obviously, $v$ is even. So Condition C1 in Lemma \ref{lem-DH12} is satisfied. We now consider Condition
C3, and study the solutions $x \in \gf(q)$ of the following equation
\begin{equation}\label{eqn-1stc6.1}
(x+1)^v - x^v = 1.
\end{equation}
It is clear that Equation (\ref{eqn-1stc6.1}) is equivalent to the following system of equations
\begin{equation}\label{eqn-1stc6.2}
y-x=1, \ \ y^v - x^v = 1.
\end{equation}

We now consider the solutions
$(x, y) \in \gf(q)^2$ of (\ref{eqn-1stc6.2}) by distinguishing among the following four cases.

\subsubsection*{Case 1, $x=x_1^s$ and $y=y_1^s$ for some $(x_1, y_1) \in \gf(q)^2$}

In this case, $(x_1, y_1)$ is a solution of
\begin{equation}\label{eqn-2stc2.3}
y_1^s-x_1^s=1, \ \ y_1^2 - x_1^2 = 1.
\end{equation}
Let $\theta=y_1-x_1$. Clearly, $\theta \ne 0$.
Then plugging the expressions of $(x_1, y_1)$ in Lemma \ref{lem-gamma} into
the first equation of (\ref{eqn-2stc2.3}) yields
\begin{eqnarray}\label{eqn-2stc2.5}
 \theta^{3^{(m+1)/8}-1} + \theta^{1-3^{(m+1)/8}} = -1.
\end{eqnarray}

Obviously, the equation
$
 z + z^{-1} = -1
$
has the unique solution $z=1$. In this case, it follows from (\ref{eqn-2stc2.5})
that $\theta^{3^{(m+1)/8}-1}=1$. Note that $\gcd(3^{(m+1)/8}-1, 3^m-1)=2$. Hence $\theta=\pm 1$.
It then follows from the first equation in (\ref{eqn-4all1}) that
$
x_1=\theta -\theta^{-1}=0.
$
Hence $x=x_1^s=0$.

\subsubsection*{Case 2, $x=-x_1^s$ and $y=-y_1^s$ for some $(x_1, y_1) \in \gf(q)^2$}

In this case, $(x_1, y_1)$ is a solution of
\begin{equation}\label{eqn-22stc2.3}
y_1^s-x_1^s=-1, \ \ y_1^2 - x_1^2 = 1.
\end{equation}
Let $\theta=y_1-x_1$. Clearly, $\theta \ne 0$.
Then plugging the expressions of $(x_1, y_1)$ in Lemma \ref{lem-gamma} into
 the first equation of (\ref{eqn-22stc2.3}) yields
\begin{eqnarray}\label{eqn-22stc2.5}
 \theta^{3^{(m+1)/8}-1} + \theta^{1-3^{(m+1)/8}} = 1.
\end{eqnarray}

The equation
$
 z + z^{-1} = 1
$
has the unique solution $z=-1$. It then follows from (\ref{eqn-22stc2.5}) that
$ \theta^{3^{(m+1)/8}-1}=-1$. This is impossible as $-1$ is not a square in $\gf(q)$.

\subsubsection*{Case 3, $x=-x_1^s$ and $y=y_1^s$ for some $(x_1, y_1) \in \gf(q)^2$}

In this case, $(x_1, y_1)$ is a solution of
\begin{equation}\label{eqn-322stc2.3}
y_1^s+x_1^s=1, \ \ y_1^2 - x_1^2 = 1.
\end{equation}
Let $\theta=y_1-x_1$. Clearly, $\theta \ne 0$.
Then plugging the expressions of $(x_1, y_1)$ in Lemma \ref{lem-gamma} into
 the first equation of (\ref{eqn-322stc2.3}) gives
$
2(\theta^s + \theta^{-s})=1.
$
Note that $\gcd(s, 3^m-1)=2$. We obtain that $\theta=\pm 1$. It then follows that
$
x_1=0
$
and $x=x_1^s=0$.

\subsubsection*{Case 4, $x=x_1^s$ and $y=-y_1^s$ for some $(x_1, y_1) \in \gf(q)^2$}

In this case, $(x_1, y_1)$ is a solution of
\begin{equation}\label{eqn-4322stc2.3}
y_1^s+x_1^s=-1, \ \ y_1^2 - x_1^2 = 1.
\end{equation}
Let $\theta=y_1-x_1$. Clearly, $\theta \ne 0$.
Then plugging the expressions of $(x_1, y_1)$ in Lemma \ref{lem-gamma} into
the first equation of (\ref{eqn-4322stc2.3}) gives
$$
\theta^s + \theta^{-s}=1,
$$
which is impposible, as $z+z^{-1}=1$ has no solution $x\in \gf(q)$.

Summarizing the conclusions in the four cases above, we proved that Condition C3 in Lemma \ref{lem-DH12}
is satisfied. 

One can similarly prove that Condition  C2 in Lemma \ref{lem-DH12} is satisfied. 

Finally, the desired conclusions on the parameters of this code follow from Lemma \ref{lem-DH12}.
\end{proof}

Note that the codes $\C_{(1,v,p,m)}^\perp$ of Theorem \ref{thm-class-6d} are optimal in the sense that
the minimum distance of any ternary linear code of length $3^m-1$ and dimension $3^m-1-2m$ is at
most 4 due to the sphere-packing bound.

\section{Three families of three-weight ternary cyclic codes and their weight enumerators}\label{sec-2nd3wtc}

In this section, we present three families of three-weight codes whose weight distributions are given
in Table \ref{tab-LuoFeng} and are different from the one in Table \ref{tab-CGAPN}. To this end, we need
the following lemma proved by Feng and Luo \cite{FL08}.

\begin{lemma}\label{lem-FL08}
Let $m\ge 3$ be odd, $p$ be an odd prime and let $h$ be a positive integer with $\gcd(m, h)=1$. Then the code
$\C_{(p^h+1, 2, p, m)}$ has dimension $2m$ and the weight distribution in Table \ref{tab-LuoFeng}.
\end{lemma}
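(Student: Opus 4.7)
The plan is to apply the character-sum weight formula (\ref{eqn-wtformula}) with $u=p^h+1$ and $v=2$ and to recognize the resulting exponential sum as a Gauss sum of a quadratic form over $\gf(p)$. Since $T_{(p^h+1,2)}(a,b)=R(a,b)$ in the notation of Lemma \ref{Lemma-Quadratic-expo-1}, the weight formula reads
\begin{eqnarray*}
\wt(\bc(a,b))=(p-1)p^{m-1}-\frac{1}{p}\sum_{y\in\gf(p)^*}R(ya,yb).
\end{eqnarray*}
A short substitution $x\mapsto t^{-1}x$ with $t\in\gf(p)^*$ and $t^2=y$ (using $t^{p^h}=t$) shows that $R(ya,yb)$ depends on $y$ only through its quadratic class in $\gf(p)^*$, so with any fixed nonsquare $c\in\gf(p)^*$ one has $\sum_{y\in\gf(p)^*}R(ya,yb)=\tfrac{p-1}{2}\bigl(R(a,b)+R(ca,cb)\bigr)$. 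Thus the weight distribution will follow once the joint distribution of $(R(a,b),R(ca,cb))$ as $(a,b)$ ranges over $\gf(q)^2$ is pinned down.

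To evaluate $R(a,b)$, I would first verify the dimension: the hypotheses $m$ odd and $\gcd(m,h)=1$ give $\ell_{p^h+1}=\ell_2=m$ and $\ttC_{p^h+1}\cap\ttC_2=\emptyset$, so Delsarte's theorem yields $\dim\C_{(p^h+1,2,p,m)}=2m$. Next, via the identification $\gf(q)\cong\gf(p)^m$, the map $x\mapsto\tr_1^m(ax^{p^h+1}+bx^2)$ is a quadratic form $Q_{a,b}$ whose associated symmetric bilinear form has radical equal to the $\gf(p)$-kernel of the linearized polynomial
\begin{eqnarray*}
L_{a,b}(X)=a^{p^{m-h}}X^{p^{m-h}}+aX^{p^h}+2bX\in\gf(q)[X].
\end{eqnarray*}
By the classical Gauss-sum formula for quadratic forms, if $Q_{a,b}$ has rank $r$ then $R(a,b)=\eta_{a,b}\,p^{(2m-r)/2}$, where $\eta_{a,b}$ is a fourth root of unity governed by the discriminant of $Q_{a,b}$ and by $p\bmod 4$.

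The combinatorial core is then to count, for each admissible rank $r$, how many pairs $(a,b)\in\gf(q)^2$ give $\dim_{\gf(p)}\ker L_{a,b}=m-r$ and to identify the discriminant type in each case. Raising $L_{a,b}(x)=0$ to the $p^h$-th power and setting $z=x^{p^h-1}$ for $x\ne 0$ converts this into counting $\gf(q)^*$-solutions of an equation of shape $a^{p^h}z^{p^h+1}+2b^{p^h}z+a=0$; the fact that $\gcd(p^h+1,p^m-1)=2$ under our hypotheses renders the counting tractable via elementary finite-field arguments. Pairing the rank counts with the Gauss-sum phases and inserting the result into the formula for $\sum_{y}R(ya,yb)$ yields the three weights and multiplicities recorded in Table \ref{tab-LuoFeng}. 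The main obstacle will be precisely the simultaneous bookkeeping of rank and discriminant type across the $(a,b)$-plane: this is where Feng and Luo's detailed analysis of quadratic forms of shape $ax^{p^h+1}+bx^2$ for odd $p$ is essential, and it is also the reason the symmetrized variant in Lemma \ref{Lemma-Quadratic-expo-1} is more convenient to state than the unsymmetrized distribution actually needed here.
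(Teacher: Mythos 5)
The paper does not prove this lemma at all: it is imported verbatim from Feng and Luo \cite{FL08}, so there is no in-paper argument to compare against. Your outline is the route that the cited source actually takes (interpret $\tr_1^m(ax^{p^h+1}+bx^2)$ as a quadratic form over $\gf(p)$, compute $R(a,b)$ via the rank and discriminant, then count). The steps you do spell out check out: the reduction $\sum_{y\in\gf(p)^*}R(ya,yb)=\tfrac{p-1}{2}\bigl(R(a,b)+R(ca,cb)\bigr)$ via $x\mapsto t^{-1}x$ with $t^{p^h}=t$ is correct; the radical of the bilinear form is indeed the kernel of $L_{a,b}(X)=a^{p^{m-h}}X^{p^{m-h}}+aX^{p^h}+2bX$; and raising to the $p^h$-th power and substituting $z=x^{p^h-1}$ does reduce the kernel computation to $a^{p^h}z^{p^h+1}+2b^{p^h}z+a=0$ with $\gcd(p^h+1,p^m-1)=2$. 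However, as a self-contained proof your proposal stops exactly where the content of the lemma lives: the simultaneous determination of the rank spectrum and the discriminant types across $(a,b)\in\gf(q)^2$, together with the matching of signs after averaging over the square and nonsquare classes of $y$, is what produces the three weights and, more delicately, the multiplicities $\tfrac12(p^m-1)(p^{m-1}\pm p^{(m-1)/2})$ in Table \ref{tab-LuoFeng}. You explicitly delegate that step back to Feng--Luo, so your write-up is a correct and well-aimed skeleton rather than a proof; since the paper itself cites the entire lemma, this puts you on equal footing with the paper, but if a genuine proof were required you would still need to carry out the rank/type bookkeeping (or at least show that the possible ranks are $m$ and $m-1$ only and pin down the frequencies via the first few power moments of $R$).
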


\begin{table}[ht]
\caption{Weight distribution II}\label{tab-LuoFeng}
\centering
\begin{tabular}{|l|l|}
\hline
\hline
Weight $w$    & No. of codewords $A_w$  \\ \hline
  $0$ & $1$ \\
  \hline
 $(p-1)(p^{m-1}-p^{(m-1)/2 })$ & $\frac{1}{2}(p^m-1)(p^{m-1}+p^{(m-1)/ 2})$\\
\hline
$(p-1)p^{m-1}$ & $(p^m-1)(p^m-p^{m-1}+1)$\\
\hline
$(p-1)(p^{m-1}+p^{(m-1)/2 })$ & $\frac{1}{2}(p^m-1)(p^{m-1}-p^{(m-1)/ 2})$\\
\hline
\hline
\end{tabular}
\begin{tablenotes}
\end{tablenotes}
\end{table}

\begin{theorem}\label{thm-class-d3}
Let $m \ge 3$ be odd and $p=3$. Then $\C_{(1, v, p, m)}$
is a $[p^m-1, 2m]$ ternary cyclic code  with the weight distribution in Table \ref{tab-LuoFeng} if
\begin{itemize}
\item $v=(3^{m+1}-1)/(3^h+1)+(3^m-1)/2$, where $(m+1)/h$ is even; or
\item $v=(3^{(m+1)/8}-1)(3^{(m+1)/4}+1)(3^{(m+1)/2}+1) + ({3^m-1})/{2}$,
          where $m \equiv 7 \pmod{8}$; or
 \item $v=(3^{(m+1)/4}-1)(3^{(m+1)/2}+1) + ({3^m-1})/{2}$,
          where $m \equiv 3 \pmod{4}$.
\end{itemize}
\end{theorem}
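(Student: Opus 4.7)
The plan is to show that for each of the three choices of $v$, the code $\C_{(1,v,3,m)}$ has the same weight function on $\gf(q)^2$ as the Feng--Luo code $\C_{(3^h+1, 2, 3, m)}$ for an appropriate $h$, so that Table \ref{tab-LuoFeng} is inherited directly from Lemma \ref{lem-FL08}. The engine is Lemma \ref{Lemma-key} applied with $s = 3^h + 1$: concretely, in case one $h$ is the parameter appearing in the formula for $v$, and in cases two and three $h = (m+1)/8$ and $h = (m+1)/4$ respectively.

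The key arithmetic verifications I would carry out in each case are the following. First, $\gcd(m,h) = 1$, which holds because the hypothesis on $(m+1)/h$ (or the explicit form $m \equiv 7 \pmod 8$, $m \equiv 3 \pmod 4$) forces $m \equiv -1 \pmod h$; combined with $m$ odd this yields $\gcd(3^h+1, 3^m-1) = 2$, so Lemma \ref{Lemma-key} is applicable. Second, $sv \equiv 2 \pmod{3^m-1}$: in each case the chosen multiplicands telescope under $(3^a+1)(3^a-1) = 3^{2a}-1$. For instance, in case two,
\begin{equation*}
(3^h+1)(3^h-1)(3^{(m+1)/4}+1)(3^{(m+1)/2}+1) = (3^{(m+1)/2}-1)(3^{(m+1)/2}+1) = 3^{m+1}-1 \equiv 2 \pmod{3^m-1},
\end{equation*}
and the residual summand $(3^h+1)(3^m-1)/2$ vanishes modulo $3^m-1$ because $(3^h+1)/2$ is an integer; cases one and three are entirely analogous. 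Third, $v$ is odd: in case one, $(3^{m+1}-1)/(3^h+1)$ is a sum of an even number of powers of $3$ with alternating signs (hence even), while $(3^m-1)/2$ is odd (since $m$ odd forces $3^m \equiv 3 \pmod 4$); in cases two and three the explicit factor $3^{(m+1)/8}-1$ or $3^{(m+1)/4}-1$ already makes the product summand even.

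With these facts in hand, take $\lambda = -1$ (a nonsquare in $\gf(3^m)$ for odd $m$) and use $\lambda^v = -1$ so that Lemma \ref{Lemma-key} specializes to
\begin{equation*}
T_{(1,v)}(a,b) = \tfrac{1}{2}\bigl( R(a,b) + R(-a,-b) \bigr), \qquad R(a,b) := \sum_{x \in \gf(q)} \chi(a x^{3^h+1} + b x^2).
\end{equation*}
Substituting into (\ref{eqn-wtformula}) and summing over $y \in \gf(3)^* = \{1,-1\}$ gives
\begin{equation*}
\wt(\bc(a,b)) = 2 \cdot 3^{m-1} - \tfrac{1}{3}\bigl(R(a,b) + R(-a,-b)\bigr),
\end{equation*}
which is exactly the weight function of the codeword indexed by $(a,b)$ in $\C_{(3^h+1,2,3,m)}$. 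Since Lemma \ref{lem-FL08} supplies dimension $2m$ for the Feng--Luo code, its parametrization $(a,b) \mapsto \bc(a,b)$ is injective on $\gf(q)^2$; the identity of weight functions then forces the $\C_{(1,v,3,m)}$-parametrization to be injective as well, giving the claimed dimension $2m$ and the weight distribution of Table \ref{tab-LuoFeng}.

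The main obstacle is the modular identity $sv \equiv 2 \pmod{3^m-1}$; once this is verified the rest of the argument is uniform across the three families. The three expressions for $v$ in the theorem statement are precisely those for which an elementary telescoping identifies $(3^h+1)v$ with $3^{m+1}-1$ modulo $3^m-1$, and this is exactly why these three subfamilies exist and share Table \ref{tab-LuoFeng}.
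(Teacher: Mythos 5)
Your proposal is correct and follows essentially the same route as the paper: apply Lemma \ref{Lemma-key} with $s=3^h+1$ (for $h$ equal to the parameter in the first case and to $(m+1)/8$, $(m+1)/4$ in the other two), use $sv\equiv 2\pmod{3^m-1}$, the oddness of $v$, and the nonsquare $\lambda=-1$ (the paper's $\alpha^{(q-1)/2}$) to identify the weight function with that of the Feng--Luo code $\C_{(3^h+1,2,3,m)}$, then invoke Lemma \ref{lem-FL08}. Your explicit verifications of $\gcd(m,h)=1$, the telescoping for $sv\equiv 2$, the parity of $v$, and the injectivity argument for the dimension are all sound and, if anything, spell out details the paper leaves implicit.
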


\begin{proof}
We now prove the conclusion of this theorem for the first $v$.
Let $p=3$ and $q=p^m-1$. Define $\lambda=\alpha^{(q-1)/(p-1)}$, where $\alpha$ is a generator of
$\gf(q)^*$. Clearly $\lambda$ is a generator of $\gf(p)^*$. Since $m$ is odd, $\lambda$ is a nonsquare
in $\gf(q)$. It is easy to verify that $\lambda^v=\lambda$.

Let  $s=p^{h}+1$. Then $\gcd(s,p^m-1)=2$ since $m$ is odd.
It is easy to verify that $v$ is odd and $sv\equiv 2~(\bmod~q-1)$.
Applying Lemma \ref{Lemma-key}, we have
\begin{eqnarray*}
T_{(1,v)}(a,b)
&=& \frac{1}{2}\left(\sum_{x\in \gf(q)}\chi(ax^{s}+bx^{sv})+\sum_{x\in \gf(q)}\chi(a\lambda x^{s}+b\lambda^{v} x^{sv})           \right) \\
&=& \frac{1}{2}\left(\sum_{x\in \gf(q)}\chi(ax^{s}+bx^{sv})+\sum_{x\in \gf(q)}\chi(a\lambda x^{s}+b\lambda x^{sv})           \right) \\
&=& \frac{1}{2}\left(\sum_{x\in \gf(q)}\chi(ax^{p^h+1}+bx^{2})+\sum_{x\in \gf(q)}\chi(a\lambda x^{p^h+1}+b\lambda x^{2})           \right).
\end{eqnarray*}
Hence,
$$
\sum_{y \in \gf(p)^*} T_{(1,v)}(ay,by)= \sum_{y \in \gf(p)^*} \sum_{x\in \gf(q)}\chi(ayx^{p^h+1}+byx^{2}).
$$
It then follows from (\ref{eqn-wtformula}) that the Hamming weight of the codeword $\bc(a,b)$ in $\C_{(1, v, p, m)}$
is equal to that of  the codeword $\bc(a,b)$ in  $\C_{(p^h+1, 2, p, m)}$. Hence the two codes have the same
weight distribution. The desired conclusion on the weight distribution of the code $\C_{(1, v, p, m)}$ follows from
Lemma \ref{lem-FL08} as $\gcd(h, m)=1$.

The proof of the conclusion for $v=(3^{(m+1)/8}-1)(3^{(m+1)/4}+1)(3^{(m+1)/2}+1) + ({3^m-1})/{2}$ is similar to that for the first $v$ except that we need to set $h=(m+1)/8$ and is omitted.

The proof of the conclusion for $(3^{(m+1)/4}-1)(3^{(m+1)/2}+1) + ({3^m-1})/{2}$ is similar
to that for the first $v$ except that we need to set $h=(m+1)/4$ and is omitted.
\end{proof}

The dual codes  $\C_{(1, v, p, m)}^\perp$ of the codes $\C_{(1, v, p, m)}$ in Theorem \ref{thm-class-d3} has
parameters $[p^m-1, p^m-1-2m, 2]$. The minimum distance of  $\C_{(1, v, p, m)}^\perp$ is 2 as $v$ is odd.

\section{Summary and concluding remarks}\label{sec-fin}

In this paper, we presented five families of three-weight ternary cyclic codes and settled their weight distributions.
The duals of the first two families of ternary codes are optimal. The first two families of cyclic codes have
the weight distribution in Table \ref{tab-CGAPN}, while the last three families have the weight distribution of table
\ref{tab-LuoFeng}. It would be interesting to investigate the applications of these cyclic codes in authentication codes, secret sharing and frequency hopping sequences using the frameworks developed in \cite{CDY05,DFFJ,DS06,DW05,YD06}.

The key technique for settling the weight distributions of these cyclic codes in this paper is the application of the
noninvertible transformations  $x \mapsto x^s$ from $\gf(q)$ to $\gf(q)$ with $\gcd(s, q-1)=2$ and Lemma
\ref{Lemma-key}. With these innovations we were able to determine the weight distributions of the cyclic codes
with the help of known results on certain exponential sums.

Note that $\gcd(v, q-1)=1$ and $v-1 \equiv 0 \pmod{p-1}$ for all the $v$'s listed in Theorem \ref{thm-class-d3}.
It follows from the discussions in Section A2 in \cite{Katz} and the weight distribution of the code $\C_{(1, v, p, m)}$
of Theorem \ref{thm-class-d3} that the crosscorrelation function of any maximum-length sequence of period $q-1$
over $\gf(p)$ and its $v$-decimated version takes on only the following three  correlation values:
$$
-1-p^{(m+1)/2}, \ -1, -1+p^{(m+1)/2}.
$$
These three-level decimation values $v$ should be new and form another contribution of this paper to the theory of sequences. 

\section*{Acknowledgments}

The authors are very grateful to the reviewers and the Associate Editor, Prof. Ian F. Blake, for their comments and suggestions
that improved the presentation and quality of this paper.

\end{document}